\definecolor{earth}{rgb}{0.549,0,0}
\newtheoremstyle{plain}
  {}   % ABOVESPACE
  {}   % BELOWSPACE
  {\itshape}  % BODYFONT
  {}       % INDENT (empty value is the same as 0pt)
  {\itshape} % HEADFONT
  {.}         % HEADPUNCT
  {5pt plus 1pt minus 1pt} % HEADSPACE
  {}          % CUSTOM-HEAD-SPEC
\newtheoremstyle{definition}
  {}   % ABOVESPACE
  {}   % BELOWSPACE
  {\normalfont}  % BODYFONT
  {}       % INDENT (empty value is the same as 0pt)
  {\itshape} % HEADFONT
  {.}         % HEADPUNCT
  {5pt plus 1pt minus 1pt} % HEADSPACE
  {}          % CUSTOM-HEAD-SPEC
\renewcommand\bm[1]{#1}
\definecolor{Red}{rgb}{1,0,0}
\def\bra#1{\mathinner{\langle{#1}|}}
\def\ket#1{\mathinner{|{#1}\rangle}}
\def\ketbra#1#2{{\ket{#1}}{\bra{#2}}}
  \gdef\Braket#1{\begingroup
\mathcode`\|32768\let|\BraVert\left<{#1}\right>\endgroup}}
\def\BraVert{\egroup\,\mid\,\bgroup}
\def\sh{\mathcal{H}}
\newtheoremstyle{plain}
  {}   % ABOVESPACE
  {}   % BELOWSPACE
  {\itshape}  % BODYFONT
  {}       % INDENT (empty value is the same as 0pt)
  {\itshape} % HEADFONT
  {.}         % HEADPUNCT
  {5pt plus 1pt minus 1pt} % HEADSPACE
  {\thmname{#1}\thmnumber{ #2}\thmnote{. #3}}          % CUSTOM-HEAD-SPEC
\newcounter{commco}
\newtheorem{proposition}{Proposition}
\newtheorem{theorem}[proposition]{Theorem}
\newtheorem{lemma}[proposition]{Lemma}
\theoremstyle{definition}
\newtheorem{definition}[proposition]{Definition}
\theoremstyle{remark}
\newtheorem{remark}[proposition]{Remark}
\def\ncms{{n_{\openone}}}
\begin{document}
\title{Entanglement and deterministic quantum computing with one qubit}

\author{Michel Boyer}
\email{boyer@iro.umontreal.ca}
\affiliation{DIRO, Universit\'e de Montr\'eal, Canada}
\author{Aharon Brodutch}
\email{brodutch@physics.utoronto.ca}
\affiliation{Institute for Quantum Computing, University of Waterloo, Waterloo N2L 3G1, Ontario, Canada}
\affiliation{Department of Physics and Astronomy, University of Waterloo, Waterloo, Ontario N2L 3G1, Canada}
\author{Tal Mor}
\email{talmo@cs.technion.ac.il}
\affiliation{Computer Science Department, Technion, Israel}

\date{\today}

\begin{abstract}
 
The role of entanglement and quantum correlations in complex physical systems and quantum information processing devices has become a topic of intense study in the past two decades. In this work we present new tools for learning about entanglement and quantum correlations in dynamical  systems where the quantum states are mixed and the eigenvalue spectrum is highly degenerate. We apply these results to the \emph{Deterministic quantum computing with one qubit} (DQC1)  computation model and show that the states generated in a DQC1 circuit  have an eigenvalue structure that makes them difficult to entangle, even when they are relatively far from the completely mixed state.  Our results strengthen the conjecture that it may be possible to find quantum algorithms that do not generate entanglement and yet still have an exponential advantage over their classical counterparts. 
\end{abstract}
\maketitle

\section{Introduction}

Quantum computing is a promising theoretical field that mainly deals with problems that are hard for regular computers
 and are easy for a quantum computer \cite{NielsenandChuang, Steane1998,  Bernstein1997}. Experiments in quantum computing devices did not yet employ many qubits, and the most successful approaches can only deal with 7-14 qubits \cite{SCerror,Negrevergne2006,Lu,Monz2011}. While many  experimental and theoretical efforts are geared towards the implementation of universal quantum machines with more qubits, there could  be  considerable advantages with other, less powerful models which we call \emph{Semi quantum computer} (SQC) models. 
 
SQCs   outperform classical computers for some tasks while being less powerful and potentially less  technologically demanding than universal quantum  computers.   For more than a decade, there have been various descriptions of  SQCs  that could demonstrate a quantum advantage (sometimes called \emph{quantum supremacy} \cite{Preskill2012} ). In some cases these models don't even require or produce entanglement \cite{arXiv:quant-ph/030618,Alastair15,arXiv:0906.3656,KMR05}.

Among the candidate SQC models are those such as  quantum annealers  \cite{Boixo2014},  that may be able to beat classical computers in a benchmarking experiment,  but do not seem to offer an exponential advantage (in the input size). Other candidate models, which we call Sub-universal quantum computers (SuQCs) probably do offer an exponential advantage for a few specific tasks but remain sub universal. Three   prominent examples of  SuQCs are the linear optics model   \cite{Aaronson2010}, instantaneous quantum computing \cite{Shepherd2008}, and deterministic quantum computing with 1 qubit (DQC1)\cite{KL}.

For each (candidate) SuQC model there are three important questions we can ask: 
\begin{enumerate}
\item What hard problems can it solve efficiently  and are these problems interesting (e.g for science or industry) ? 
\item Is it indeed much more feasible experimentally, and if so, why?
\item What are the conceptual properties that make it (exponentially) more powerful than a classical computer.
\end{enumerate}

Briefly let us mention the status  of %of  the LOQC and 
DQC1 
%models 
in connection to the first two questions:
%There is at least one problem, Boson Sampling, that is efficiently solvable by the LOQC model. There is  strong evidence to believe that the Boson sampling problem  is beyond the power of regular computer and is in-fact outside the polynomial hierarchy\cite{Aaronson2010}.  In terms of experimental feasibility, a number of experiments demonstrated boson sampling of up to 9 modes  \cite{9modeboson}. 
A  generic problem for the DQC1 model is  the trace estimation problem, i.e  estimating the normalized trace of an efficiently implementable unitary operation.  For some subsets of $n$-qubit unitaries this problem  is believed to be hard for a classical computer \cite{Datta2005,Shor2007}, i.e there is good evidence that it impossible to find  a classical polynomial time (in $n$) algorithm for estimating the normalized trace.  For example, it is possible to use the DQC1 model to  estimate  Jones polynomial at a fifth root of unity for the trace closure of a braid.  This special case is  complete for DQC1 and   was  implemented experimentally on 4 qubits \cite{Passante2009}.  The model is, however, and idealization as it assumes no errors in the implementation. It is currently unknown if DQC1 is a viable model that can be scaled up in the presence of imperfections.

Here we deal with the third question with regards to DQC1.  Our approach is to study the space of quantum states that can be generated during the computational process, and in particular study the correlations within these  states. This approach follows earlier works on the circuit model \cite{Jozsa2003,arXiv:quant-ph/0301063}, DQC1 \cite{Datta2007,DattaShajiCaves08,datta,arXiv:1109.5549,arXiv:quant-ph/0110029,arXiv:1006.2460},  measurement based quantum computing \cite{Rieffel2014},  and other models  \cite{arXiv:0906.3656}.  As in these works, we do not take error correction into account.

  For pure state quantum circuits,  
Vidal \cite{arXiv:quant-ph/0301063} showed that a circuit can be efficiently simulated on a classical computer when  the amount of entanglement  
is not  large (i.e the maximal Schmidt rank over all bipartition grows slower than $\log(n)$) at all times. 
Josza and Linden \cite{Jozsa2003} showed that there is an efficient  classical simulation of a quantum 
circuit  if  the register has a $p$ block structure for constant $p$ at all times. An n-qubit state $\rho$ has 
a $p$ block structure if $\rho=\bigotimes_j\varrho_j$ and $\varrho_j$ are  $m_j \le p$ qubit states. This 
structure implies that the blocks are not correlated.  When  $\rho$ is pure,  the lack of correlations 
means that there is no entanglement  between  blocks of qubits.  Preliminary results on the DQC1   
model \cite{DattaShajiCaves08,DiscordRMP} have given some reason to suspect that discord, a more general 
measure of quantum correlations, may play  a similar role to pure state entanglement 
in this context. 

For any measure of  quantum correlations (e.g  entanglement  or  discord) and any specific way of quantifying these correlations  $\mathcal{C}$,  we may ask if one of the following is a necessary condition for computational speedup: 

{\it
\begin{enumerate}
\item At some point in the algorithm the register must have large amounts of quantum correlations $\mathcal{C}$ for at least one bipartition. \label{S1} 
\item At some point in the algorithm the register must have some quantum correlations for exponentially many bi-partitions. \label{S3}
\end{enumerate}
}

By large amounts we mean that $\mathcal{C}$ scales in a similar way to its upper bound (for example if $\mathcal{C}$ is the entanglement monotone \emph{negativity} \cite{Neg}, it should scale faster than  $Polylog(n)$).  %Similarly, by many bi-partitions we mean that the number of bi-partitions is lower bounded by a constant fraction of the number of possible bi-partitions.  

  There is already strong evidence that  statement \ref{S1} is not valid for entanglement \cite{Datta2005} i.e it is known that  entanglement quantified by multiplicative negativity  at any point in a DQC1 circuit  is bounded from above by a constant.  Moreover, for universal quantum computing, it is known that \ref{S1}  can only be true for particular choices of entanglement measures \cite{Nest2012}. There is  some evidence to support  \ref{S1} for  other types of quantum correlations \cite{DiscordRMP}, i.e in DQC1 the correlations as measured by the operator Schmidt rank grow quickly with the system size \cite{Datta2007} and there is discord between the first qubit and the rest of the system \cite{DattaShajiCaves08}. However the latter statement does not imply that there are large amounts of discord for a bipartition that can support a lot of discord, moreover it is known that the results on discord  are not symmetric, i.e there is no discord if the measurements are made on the last $n$ qubits \cite{DiscordRMP}.

Our results below are related to statement \ref{S3}.  We present new tools to study entanglement in degenerate quantum systems undergoing unitary evolution  and  show that  while general DQC1 circuits can generate a state which is entangled over many bipartitions, this entanglement is always more sensitive to depolarizing noise than a generic quantum state.  We also show that specific DQC1 complete  circuits  have less entangling power than generic circuits.  As a conclusion of our results, we conjecture that statement \ref{S3} is probably  violated for entanglement (i.e for  DQC1 with large $\alpha$, see Eq. \ref{DQC1state} below).  The evidence in support of   statement  \ref{S3} for discord  is  significantly  stronger, since discord is more robust to depolarizing noise. 

\section{Definitions}

\subsection{ Entanglement and Discord}
 A state $\rho$ on 
$\mathcal{H}=\mathcal{H}_A\otimes\mathcal{H}_B$ is said to be separable with respect to the bipartition $\{A;B\}$ 
if and only if it can be decomposed as $\rho=\sum_l\rho^A_l\otimes\rho^B_l$ where $\rho^A_l$ are states on 
$\mathcal{H}_A$ and $\rho^B_l$ are states on $\mathcal{H}_B$. If $\rho$ is not separable, it is entangled.  
A sufficient condition for entanglement is that the partial transpose of $\rho$ denoted $(T\otimes \openone)(\rho)$ 
has a negative eigenvalue. Such a state is said to be a non positive partial transpose (non PPT) state; non PPT states are entangled.
%It is an open problem whether non PPT states $\rho$ are distillable, i.e. whether 
% many copies of $\rho$ can be used to create maximally entangled states [cite].
If there are two  pure states $\ket{\psi},\ket{\phi}$ such that 
$\bra{\phi}(T\otimes \openone)(\rho)\ket{\phi}=0$ while $\bra{\psi}(T\otimes \openone)(\rho)\ket{\phi}\ne0$, then $\rho$ is non PPT \cite{TPNC2014}.  

While most quantum states  on $\mathcal{H}=\mathcal{H}_A\otimes\mathcal{H}_B$ are entangled when the Hilbert space dimensions are large, there is always a ball of separable states around the 
completely mixed state,
i.e for an $n$ qubit system and a given bipartition there is a finite $\epsilon$ such that all  $\tau$ with $||\tau-\frac{1}{2^n}\openone||_1<\epsilon$ are separable \cite{Gurvits2002}.  One  implication is that  for small $n$, a room temperature liquid state NMR processor is never entangled for any bipartition \cite{PhysRevLett.83.1054}.

States on the boundary of separable states are called \emph{boundary separable} \cite{TPNC2014}.  Here we define a new subset of boundary separable states (see appendix \ref{App:BSS} for proof that this is a subset) .

\begin{definition}\label{def:BSU}
A separable state $\rho$ is \emph{boundary separable in its unitary orbits} if  for all $\epsilon>0$ there is a unitary $U^\epsilon$ with $||U^\epsilon-\openone||<\epsilon$ such that $U^\epsilon \rho U^{\epsilon^\dagger}$ is entangled. 
\end{definition}

This subset is particularly relevant to systems undergoing unitary dynamics. Some states cannot be entangled by any unitary operation on the system \cite{Hildebrand2007}:
\begin{definition}\label{Def:SFS}
 A state $\rho$  is called \emph{separable from spectrum}  if for all unitaries $U$ the state $U\rho U^\dagger$ is separable. Similarly it is \emph{PPT from spectrum } if for all unitaries $U$ the state $U\rho U^\dagger$ is PPT.
\end{definition}

\emph{Discord} is an asymmetric measure of quantum correlations. A bipartite quantum state $\rho$ on $\mathcal{H}_A\otimes\mathcal{H}_B$  is  zero discord  with respect to a measurement on subsystem A if and only if \cite{OllivierZurek01,DiscordRMP,Boyer2016} there is a basis of states $\{\ket{l}\}$ for $\mathcal{H}_A$ such that $\rho=\sum_l a_l\,\ketbra{l}{l}^A\otimes \varrho^B_l$, with $\varrho^B_l$ states on $\mathcal{H}_B$. If $\rho$  is not zero discord, it is discordant.  For pure states, discord is symmetric and coincides with entanglement, moreover any entangled state is always discordant for measurements on either subsystem \cite{DiscordRMP}. Unlike separable states, the set of zero discord states is nowhere dense \cite{Ferraroetal10}, i.e there is no `ball' of zero discord states.

\subsection{DQC1}
The input to an $n+1$ qubit DQC1 circuit is a $\mathrm{Poly}(n)$  description of a unitary quantum circuit  $U$ that can be implemented efficiently as a $\mathrm{Poly}(n)$ sequence of one and two qubit gates chosen from a universal gate set. The circuit is a applied to the initial  $n+1$ qubit state 

\begin{equation}\label{DQC1state}
\rho^\alpha_n=\frac{1-\alpha}{2^n}\,\ketbra{0}{0}\otimes \openone_n+\frac{\alpha}{2^{n+1}} \openone_{n+1}.
\end{equation}
 
 After the evolution, the first qubit is measured and the expectation values of an operator $\sigma_\mu\in\{\sigma_y, \sigma_x\}$ on the first qubit is recorded \footnote{Our results  concern the intermediate states so they also apply to more general algorithms where the readout is not restricted.  Such versions of DQC1 have been considered in the past, although it is not  clear how far one can relax this restriction before making DQC1 universal},  i.e the output of the computation is 
\begin{equation}\label{eq:DQC1}
tr[U\rho^\alpha_nU^\dagger \sigma_{\mu}]=
\frac{1-\alpha}{2^n}tr\left[U\left(\ketbra{0}{0}\otimes\openone_n\right)U^\dagger\left(\sigma_{\mu}\otimes\openone_n\right)\right].
\end{equation}

Note that we define DQC1 as an estimation problem and use the term \emph{complete} below in that context. 

  We define cDQC1 to be the subset of DQC1 circuits with the restriction $U=[\,\ketbra{0}{0}\otimes\openone_n+\ket{1}\bra{1}\otimes V]H_1$ where $V$ is a unitary that can be efficiently decomposed into a polynomial number of one or two qubit gates and $H_1$ is a Hadamard gate on the first qubit.   This family of controlled unitaries  is vanishingly small in the set of all $n+1$ qubit unitaries. Nevertheless, it is sufficient for solving problems that are  DQC1 complete \cite{Shor2007} and most of the results regarding correlations in DQC1 were restricted to this   model.  To falsify  statements about necessary conditions for computational speedup\footnote{Assuming DQC1 is a SuQC}, it is sufficient to show that they do not hold for a single \emph{complete} problem for  DQC1, so the restriction to cDQC1 is well motivated \footnote{However, one should be careful since the reduction may require  additional qubits.}.

We will use the notation $\ncms$  to denote the subsystem consisting of the  $n$ qubits that are initially in the maximally mixed state. In cDQC1 the first qubit in the bipartition $\{1;\ncms\}$ plays a special role since it is the only qubit to have any coherence at any time. Note that $\{1;\ncms\}$ is a special case of  $\{1;n\}$, the set of  all bipartitions that have  one qubit for one party and $n$ for the rest (there are $n+1$ elements in the set). In general for any $k$ we  define the set of bipartitions $\{k,n+1-k\}$ where one party has $k$ qubits and the other has the rest.  There are overall $2^n-1$ non-trivial bipartitions. 

 \section{Entanglement in DQC1}

\subsection{The pure, $\alpha=0$ case:}
 Knill and Laflamme \cite{KL} intoduced DQC1 as an algorithm for liquid state NMR where it is possible to efficiently initialize $\rho^\alpha$ with  $\alpha$  close to 1. It is however instructive to consider entanglement  in the idealized  $\alpha=0$ case before continuing to general $\alpha$.

\begin{lemma}\label{lemma:BSU}
The  state $\rho^0_n$ is boundary separable in its unitary orbit for all $2^n -1$ bipartitions $\{k;n+1-k\}$ where $0<k\le n$.
\end{lemma}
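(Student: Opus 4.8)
The plan is to exhibit, for each of the $2^n-1$ bipartitions $\{A;B\}$, a family of unitaries $U^\epsilon$ with $\|U^\epsilon-\openone\|\to 0$ such that $U^\epsilon\rho^0_n U^{\epsilon\dagger}$ is non PPT (hence entangled) across $\{A;B\}$; since $\rho^0_n=2^{-n}\ketbra00\otimes\openone_n=2^{-n}\sum_{x\in\{0,1\}^n}\ketbra00\otimes\ketbra xx$ is a convex combination of fully product states and so separable for every bipartition, this is exactly what Definition \ref{def:BSU} asks for. Fix a bipartition. Relabelling the two parties if necessary is harmless, because $\PTAB(\sigma)$ and $(\mathrm{I}_A\otimes\mathrm{T}_B)(\sigma)$ are full transposes of one another and hence cospectral, so assume the first qubit lies in $A$, and pick any qubit $b$ of the nonempty part $B$. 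I will take $U^\epsilon$ to act only on the two qubits $\{1,b\}$ and as the identity on the remaining $n-1$ qubits, so that $\|U^\epsilon-\openone\|$ equals the operator norm of its two-qubit factor and tends to $0$ with $\epsilon$.

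The core is a two-qubit computation. The reduced state of $\rho^0_n$ on $\{1,b\}$ is $\sigma_0=\ketbra00\otimes\tfrac12\openone=\tfrac12\big(\ketbra{00}{00}+\ketbra{01}{01}\big)$. Choose $U^\epsilon=\exp\!\big(-i\epsilon(\ketbra{01}{10}+\ketbra{10}{01})\big)$, a small rotation coupling $\ket{01}$ and $\ket{10}$, and set $\sigma_\epsilon=U^\epsilon\sigma_0 U^{\epsilon\dagger}$. A direct expansion shows $\sigma_\epsilon$ has identically vanishing weight on $\ket{11}$ but a nonzero coherence $\tfrac{i}{2}\sin\epsilon\cos\epsilon\,\ketbra{01}{10}$ (plus its hermitian conjugate, and some weight shifted onto $\ket{10}$). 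Under the partial transpose $\PT$ on the first qubit this coherence becomes $\tfrac{i}{2}\sin\epsilon\cos\epsilon\,\ketbra{11}{00}$, so $\bra{11}\PT(\sigma_\epsilon)\ket{11}=0$ while $\bra{00}\PT(\sigma_\epsilon)\ket{11}\neq 0$ for $0<\epsilon<\pi/2$; equivalently $\PT(\sigma_\epsilon)$ restricted to $\mathrm{span}\{\ket{00},\ket{11}\}$ has strictly negative determinant.

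It remains to lift this to the full state. Since $U^\epsilon\rho^0_n U^{\epsilon\dagger}=\sigma_\epsilon\otimes 2^{-(n-1)}\openone$ on the remaining $n-1$ qubits, and $\PTAB$ distributes over this tensor product with the maximally mixed, transpose-invariant spectator factor, the product pure states $\ket\phi=\ket{11}_{1b}\otimes\ket r$ and $\ket\psi=\ket{00}_{1b}\otimes\ket r$ — with $\ket r$ any fixed computational-basis state of the $n-1$ spectator qubits — satisfy $\bra\phi\PTAB(U^\epsilon\rho^0_nU^{\epsilon\dagger})\ket\phi=0$ and $\bra\psi\PTAB(U^\epsilon\rho^0_nU^{\epsilon\dagger})\ket\phi\neq 0$. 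By the sufficient condition for a non PPT state recalled in the Definitions section \cite{TPNC2014}, $U^\epsilon\rho^0_nU^{\epsilon\dagger}$ is entangled across $\{A;B\}$; letting $\epsilon\to 0$ proves the claim for that bipartition, and since it was arbitrary, for all $2^n-1$ of them.

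The step that carries the weight is the two-qubit computation, and the only real subtlety is that the perturbation must be unitary, i.e. isospectral: one cannot simply invoke that $\sigma_0$ lies on the boundary of the separable set, since a generic small perturbation changes the eigenvalues. The generator of $U^\epsilon$ is chosen precisely so that the coherence it creates lands, after partial transposition, in a $2\times 2$ block one of whose diagonal entries is identically zero, forcing a negative eigenvalue already at first order in $\epsilon$. Everything else — the separability of $\rho^0_n$, the tensor factorization of $\PTAB$ through the maximally mixed spectators, the estimate on $\|U^\epsilon-\openone\|$, and the party-swap reduction — is routine.
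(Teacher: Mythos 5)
Your proposal is correct and follows essentially the same route as the paper's proof: a small unitary rotation in the $\mathrm{span}\{\ket{01},\ket{10}\}$ subspace of the first qubit and one qubit from the opposite party, with non-PPT detected from $\bra{11}\PT(\sigma_\epsilon)\ket{11}=0$ together with a nonzero $\ket{00}$--$\ket{11}$ off-diagonal element, and the spectator qubits absorbed as a transpose-invariant maximally mixed factor. The only differences are cosmetic (a complex exponential generator instead of the real rotation $R_\theta$, and a more explicit treatment of the lift to $n+1$ qubits and the party-swap).
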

\begin{proof} To find $U^\epsilon$ we start with $n=1$.  Let $R_\theta$ be the unitary  that acts  trivially on the subspace $\mathrm{Span}(\ket{01},\ket{10})^\perp$ and  s.t. $R_\theta\ket{01}=\cos\theta\ket{01}+\sin\theta\ket{10}$ and $R_\theta\ket{10}=-\sin\theta\ket{01}+\cos\theta\ket{10}$; then,  for all $0<\theta<\frac{\pi}{2}$,  $R_\theta \rho^0_1R_\theta^\dagger$ is entangled since it is non PPT. That can be seen by noting that   $\bra{11}(T\otimes\openone)(R_\theta\rho^0_1R_\theta^\dagger)\ket{11}=0$ whereas $\bra{00}(T\otimes\openone)(R_\theta\rho^0_1 R_\theta^\dagger)\ket{11}=\frac{1}{2}\cos\theta\sin\theta\ne0$.
Then, since $R_\theta =\openone$ for $\theta=0$, for any $\epsilon>0$ there is $0<\theta<\frac{\pi}{2}$ such that $\|R_\theta -\openone\| < \epsilon$ due to the continuity of $R_\theta$ in $\theta$,
and we can take $U^\epsilon=R_\theta$. This is also true for the more general $n$ and any bipartition; for example if the first qubit is in part $A$ and the $l_{th}$ qubit is in part $B$ it is possible to have $R_\theta$ act on those two qubits and entangle them, the rest of the system will remain in a factorized maximally mixed state.  
\end{proof}
 What is rather surprising is that the subclass of unitaries used in cDQC1 are precisely those that do not produce entanglement in the $\{1;\ncms\}$  bipartition \cite{DiscordRMP}. 

So,  on the one hand, entanglement in the $\{1;\ncms\}$  bipartition is  easy in a generic DQC1 circuit (at $\alpha=0$); on the other hand  there is a  subclass, cDQC1, which is known to contain DQC1 complete problems and cannot generate entanglement in this cut. 
Our main result from the analysis of $\alpha=0$ is that the the existence of entanglement  in the general case does not indicate  that entanglement should exist in subsets of circuits  that can encode DQC1 complete problems. 

\subsection{  The $\{1;n\}$ bipartition ($\alpha > 0$):}

We continue with  entanglement  at $1 >\alpha>0$ and  a  general  $1$\,qubit -- $n$\,qubits  bipartition $\{1;n\}$  (of which $\{1;\ncms\}$  is a special case).

\begin{lemma} \label{lemma:1n}
A DQC1 circuit cannot generate entanglement at  any $\{1;n\}$ bipartition if and only if  $\alpha\ge\frac{1}{2}$.
\end{lemma}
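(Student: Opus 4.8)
The plan is to turn the statement into a question about the spectrum of $\rho^\alpha_n$. Since $\ketbra{0}{0}\otimes\openone_n$ commutes with $\openone_{n+1}$, the state $\rho^\alpha_n$ has eigenvalue $\tfrac{2-\alpha}{2^{n+1}}$ on the $2^n$-dimensional subspace on which the first qubit is $\ket{0}$ and eigenvalue $\tfrac{\alpha}{2^{n+1}}$ on the orthogonal complement; so its spectrum consists of just these two values, each of multiplicity $2^n$. A DQC1 circuit applies an arbitrary (efficiently implementable) unitary, and all $\{1;n\}$ bipartitions are equivalent under relabelling of qubits, so --- up to the efficiency qualifier, to which I return --- the claim ``no DQC1 circuit entangles any $\{1;n\}$ cut'' is the assertion that $\rho^\alpha_n$ is \emph{separable from spectrum} for the $\mathbb{C}^2\otimes\mathbb{C}^{2^n}$ partition, in the sense of Definition~\ref{Def:SFS}.

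Next I would invoke the characterisation of separability from spectrum for qubit--qudit systems \cite{Hildebrand2007}: Hildebrand's absolutely-PPT inequality, combined with the fact (special to $\mathbb{C}^2\otimes\mathbb{C}^m$) that there absolute PPT and absolute separability coincide --- or else re-derive the very special case needed here with the paper's own tools for highly degenerate spectra. For a state on $\mathbb{C}^2\otimes\mathbb{C}^m$ with a two-valued spectrum $\lambda>\mu$, each eigenvalue of multiplicity $m\ge 3$, this criterion collapses to the single inequality $\lambda\le 3\mu$, because the three smallest eigenvalues all equal $\mu$. Putting $m=2^n$, $\lambda=\tfrac{2-\alpha}{2^{n+1}}$ and $\mu=\tfrac{\alpha}{2^{n+1}}$ turns this into $2-\alpha\le 3\alpha$, i.e. $\alpha\ge\tfrac12$, which delivers both implications at once.

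To make the ``only if'' direction self-contained and to remove the efficiency qualifier, for $\alpha<\tfrac12$ I would exhibit a concrete short DQC1 circuit. Restrict to qubits $1,2,3$: tracing out the rest gives $\tr_{4\ldots n+1}\rho^\alpha_n=\tfrac{1-\alpha}{4}\,\ketbra{0}{0}\otimes\openone_{23}+\tfrac{\alpha}{8}\,\openone_{123}$, again two-valued with eigenvalues $\tfrac{2-\alpha}{8}$ and $\tfrac{\alpha}{8}$ of multiplicity $4$; by the $\mathbb{C}^2\otimes\mathbb{C}^4$ case of the criterion this violates $\lambda\le 3\mu$ precisely when $\alpha<\tfrac12$, so some three-qubit unitary $V$ makes $V\bigl(\tr_{4\ldots n+1}\rho^\alpha_n\bigr)V^\dagger$ non-PPT. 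Taking $U=V\otimes\openone_{n-2}$ and using that a partial trace on the $B$ side cannot remove a negative eigenvalue of the partial transpose, $U\rho^\alpha_n U^\dagger$ is non-PPT --- hence entangled --- across $\{1;\ncms\}$. For the ``if'' direction one may alternatively avoid quoting the separability half of the criterion via the convex identity $\rho^\alpha_n=2(1-\alpha)\,\rho^{1/2}_n+(2\alpha-1)\,\tfrac{1}{2^{n+1}}\openone_{n+1}$, valid for $\tfrac12\le\alpha\le 1$: since the maximally mixed state and the unitary conjugates of $\rho^{1/2}_n$ are the only ingredients, this reduces the entire orbit of $\rho^\alpha_n$ to separability of the single boundary state $\rho^{1/2}_n$, whose spectrum is $\tfrac{3}{2^{n+2}},\tfrac{1}{2^{n+2}}$ (multiplicity $2^n$ each).

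The genuinely hard step is the forward direction \emph{on the boundary}: absolute PPT does not imply separability on $\mathbb{C}^2\otimes\mathbb{C}^m$ for $m\ge 3$, so one must actually produce a separable decomposition of every unitary orbit of $\rho^{1/2}_n$ --- equivalently of $\tfrac{1}{2^{n+2}}\,(2\Pi+\openone_{n+1})$ for an arbitrary rank-$2^n$ projector $\Pi$ --- and this is where the qubit--qudit separability-from-spectrum theorem (or the paper's degeneracy machinery) does the real work; the partial-transpose analogue alone is elementary. Two caveats to record: first, the entangling $V$ above acts on three fixed qubits, hence is trivially an efficient gate sequence, which closes the efficiency gap; second, the clean threshold $\tfrac12$ uses $2^n\ge 3$, i.e. $n\ge 2$, so that the three smallest eigenvalues coincide --- for $n=1$ the same spectral criterion instead gives $\alpha\ge 1-\tfrac{1}{\sqrt2}$, so the statement should be read for $n\ge 2$ (or the two-qubit register treated separately).
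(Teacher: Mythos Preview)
Your core argument is exactly the paper's: compute the two-valued spectrum of $\rho^\alpha_n$ (eigenvalues $\tfrac{2-\alpha}{2^{n+1}}$ and $\tfrac{\alpha}{2^{n+1}}$, each with multiplicity $2^n$) and feed it into the qubit--qudit separability-from-spectrum criterion. The paper simply quotes Johnston's inequality $\lambda_1\le\lambda_{2^{n+1}-1}+2\sqrt{\lambda_{2^{n+1}-2}\lambda_{2^{n+1}}}$ and plugs in, obtaining precisely your $\lambda\le 3\mu$, i.e.\ $\alpha\ge\tfrac12$; your route via Hildebrand's absolute-PPT condition together with the coincidence of absolute PPT and absolute separability on $\mathbb{C}^2\otimes\mathbb{C}^m$ is the same theorem unpacked into its two ingredients.

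Two remarks. First, your final paragraph contradicts your second: you correctly state early on that absolute PPT and absolute separability coincide on $\mathbb{C}^2\otimes\mathbb{C}^m$ (this \emph{is} Johnston's result, and it is what the paper invokes), but then write ``absolute PPT does not imply separability on $\mathbb{C}^2\otimes\mathbb{C}^m$ for $m\ge3$'', which is false. Presumably you meant that \emph{PPT alone} fails to imply separability for $m\ge4$, so that upgrading Hildebrand's condition to genuine separability is nontrivial; that is true, but it is exactly what Johnston supplies, so no further boundary analysis is needed. Second, your observation that the clean threshold requires $n\ge2$ (so that the three smallest eigenvalues all equal $\mu$) is correct and is left implicit in the paper; your explicit three-qubit entangler and the convex reduction $\rho^\alpha_n=2(1-\alpha)\rho^{1/2}_n+(2\alpha-1)\tfrac{\openone}{2^{n+1}}$ are both valid but superfluous once Johnston's criterion is on the table, since it is already necessary and sufficient.
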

\begin{proof}
  In  \cite{Johnston2013}, Johnston showed that: given an $n+1$ qubit state with  eigenvalue spectrum $\lambda_1\ge\lambda_2\ge...\lambda_{2^{n+1}}$, the following is a necessary and sufficient condition for separability from spectrum in any $\{1;n\}$ bipartition, 
\begin{equation} \label{eq:sfs}
\lambda_1\le \lambda_{2^{n+1}-1}+2\sqrt{\lambda_{2^{n+1}-2}\lambda_{2^{n+1}}}.
\end{equation}

The DQC1 state has a degenerate spectrum with two eigenvalues $\frac{2-\alpha}{2^{n+1}}$ and $\frac{\alpha}{2^{n+1}}$ each with degeneracy $2^n$.   So condition \ref{eq:sfs} is violated  for $\alpha \ge \frac{1}{2}$.
\end{proof}

Lemma \ref{lemma:1n} is surprising since  $||U \rho_n^{\alpha}U^\dagger-\frac{1}{2^{n+1}}\openone||_1=1-\alpha$  so $U\rho_n^{0.5} U^\dagger$ is separable, but far outside the ball of separable states at large $n$.  To see this,  take the state $\tau=\sum_{i=2}^{2^{n+1}-1}\frac{1}{2^{n+1}-2} \ketbra{i}{i}$  with $ ||\tau-\frac{1}{2^{n+1}}\openone||=\frac{2}{2^n}$. Since $\tau$ has eigenvalues $\lambda_{2^{n+1}} = \lambda_{2^{n+1}-1} = 0$  then, by eq  \eqref{eq:sfs}, for any $\{1;n\}$  bipartition there is some $U$ such that $U\tau U^\dagger$ is entangled and therefore outside the ball of separable states.  This result shows that entanglement in the set $U\rho^\alpha_n U^\dagger$  is particularly sensitive to noise in the initial state. 

 The result above  complements the result of Datta, Flammia, and Caves \cite{Datta2005} who found an explicit family of unitaries  such that for $\alpha < \frac{1}{2}$ the state  $U\rho^\alpha_n U^\dagger$ is entangled for any bipartition. Furthermore these states are not PPT which is   consistent with  evidence  that PPT from spectrum is the same as separable from spectrum \cite{Arunachalam2014}.

\subsection{General bipartitions ($\alpha\ge 0$):}

Moving to the more general case, we build on the results of  Hildebrand  \cite{Hildebrand2007} who provided a necessary and sufficient condition for a state to be PPT from spectrum.  Hildebrand's  general condition  (see Theorem \ref{main} below) is generally difficult to apply to states with a generic  eigenvalue spectrum. In appendix \ref{App:degenerate} we show how to apply this result to the highly degenerate states  in the set $\mathcal{S}^{{\lambda_+},{\lambda_-}}_{m,n}$ defined in Def. \ref{def:S}.  The DQC1 states are in this set and the following is a special case of lemma \ref{Lemma:Hild}:

\begin{lemma}\label{lemma5} Let $\rho$ be a state on $\mathcal{H}_k\otimes \mathcal{H}_{n-k+1}$ with $k< \frac{n+1}{2}$. If $\rho$ has  two eigenvalues $\lambda_1>\lambda_2$ each with degeneracy $2^{n}$ then    $\rho$  is PPT from spectrum if and only if $\frac{1}{2}(\lambda_1+\lambda_2)-{2^{k-1}}(\lambda_1-\lambda_2)\ge0$    \end{lemma}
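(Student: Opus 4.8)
The plan is to derive this lemma as a specialization of Hildebrand's general characterization of PPT-from-spectrum states (Theorem \ref{main}, not yet visible in the excerpt but invoked as available). Hildebrand's criterion, as I understand it, reduces the question "is $U\rho U^\dagger$ PPT for all $U$?" to checking positivity of a family of matrices built from the sorted eigenvalue vector $\lambda = (\lambda_1 \ge \lambda_2 \ge \cdots)$ of $\rho$, where the relevant "worst case" $U$ rearranges the eigenvalues against a fixed maximally-entangled-type structure on $\mathcal{H}_k \otimes \mathcal{H}_{n-k+1}$ with the smaller factor having dimension $d := 2^k$. Concretely, the obstruction to PPT takes the form of a $d \times d$ matrix whose diagonal is a sum of $d$ of the eigenvalues and whose off-diagonal entries are (plus or minus) geometric-type combinations of the remaining eigenvalues; PPT from spectrum holds iff the most adversarial such matrix is still positive semidefinite.

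First I would write down the eigenvalue multiset of $\rho$: the value $\lambda_1$ with multiplicity $2^n$ and the value $\lambda_2$ with multiplicity $2^n$, for a total of $2^{n+1} = 2 d \cdot 2^{n-k}$ eigenvalues, matching $\dim(\mathcal{H}_k \otimes \mathcal{H}_{n-k+1})$. Next I would plug this two-valued spectrum into Hildebrand's matrix. Because there are only two distinct eigenvalues, the combinatorial optimization over rearrangements collapses dramatically: the adversary's best move is to load as much "weight difference" as possible into a single row/column of the $d \times d$ test matrix. I expect the extremal configuration to produce a matrix of the form $\tfrac{1}{2}(\lambda_1 + \lambda_2) I_d$ plus a rank-structured perturbation of size $2^{k-1}(\lambda_1 - \lambda_2)$ — for instance $\tfrac{1}{2}(\lambda_1+\lambda_2) I_d - \tfrac{1}{2}(\lambda_1-\lambda_2)(\ketbra{+}{+} - \ketbra{-}{-})$-type terms summing to an eigenvalue $\tfrac{1}{2}(\lambda_1+\lambda_2) - 2^{k-1}(\lambda_1-\lambda_2)$ in the worst direction. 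Computing the minimal eigenvalue of this explicit matrix and demanding it be $\ge 0$ then yields exactly the stated inequality $\tfrac{1}{2}(\lambda_1 + \lambda_2) - 2^{k-1}(\lambda_1 - \lambda_2) \ge 0$. The hypothesis $k < \tfrac{n+1}{2}$ (equivalently $2k \le n$, i.e. $d = 2^k \le 2^{n-k}$) is what guarantees $\mathcal{H}_k$ is the smaller factor and that there are enough copies of $\lambda_2$ available to realize the extremal rearrangement, so I would check that counting explicitly.

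The main obstacle will be the bookkeeping inside Hildebrand's criterion: I need to confirm that, among all ways the adversarial unitary can permute the two-valued spectrum into the block structure, the claimed configuration is genuinely the one minimizing the smallest eigenvalue of the test matrix, rather than merely one candidate. With a generic spectrum this optimization is what makes the criterion "difficult to apply" (as the excerpt notes); the saving grace of the two-eigenvalue case is that the only freedom is how many $\lambda_1$'s versus $\lambda_2$'s go into each diagonal entry and into each off-diagonal "geometric mean" slot, so the optimization is finite and small. I would dispatch it by a convexity/monotonicity argument: pushing all the discrepancy into one coordinate is extremal because the relevant function is Schur-convex in the per-row weights. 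Once that extremality is pinned down, the rest is the routine eigenvalue computation sketched above, and the equivalence with the displayed inequality is immediate. Finally I would remark that this recovers Lemma \ref{lemma:1n} when $k=1$ (giving $\lambda_1 \le \lambda_2$, forced only in the fully degenerate case, consistent with the $\alpha \ge \tfrac12$ threshold once one also accounts for the two extra eigenvalues that the $\{1;n\}$ criterion of Eq. \eqref{eq:sfs} treats separately), and that for the DQC1 state $\rho^\alpha_n$, with $\lambda_1 = \tfrac{2-\alpha}{2^{n+1}}$ and $\lambda_2 = \tfrac{\alpha}{2^{n+1}}$, the condition becomes a threshold on $\alpha$ depending on $k$.
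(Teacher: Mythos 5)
Your overall route is the paper's: specialize Hildebrand's PPT-from-spectrum criterion (Theorem \ref{main}) to the two-valued, maximally degenerate spectrum and read off the minimal eigenvalue of the resulting $2^k\times 2^k$ test matrix; the inequality you land on is the right one. However, two details of your plan are misremembered and would need repair before this becomes a proof. First, Hildebrand's test matrix contains no ``geometric-type combinations'' of eigenvalues --- you are conflating it with Johnston's $\{1;n\}$ criterion, Eq.~\eqref{eq:sfs}. The entries of $\Lambda(\lambda;\sigma_+,\sigma_-)$ are simply the smallest eigenvalues placed in the upper triangle (diagonal included) and the negated largest eigenvalues in the strict lower triangle. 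Second, your guessed extremal configuration (discrepancy loaded into a single row/column, a $\ketbra{+}{+}-\ketbra{-}{-}$-type perturbation) is not what occurs: with $p=2^k$ one gets exactly $\Lambda+\Lambda^T=(\lambda_1+\lambda_2)\openone-(\lambda_1-\lambda_2)K$ with $K$ the all-ones (rank-one) matrix, and the minimizing direction is the \emph{uniform} vector, giving minimal eigenvalue $(\lambda_1+\lambda_2)-2^{k}(\lambda_1-\lambda_2)$, i.e.\ twice the quantity in the lemma.

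Relatedly, the Schur-convexity argument you propose in order to ``dispatch the optimization over rearrangements'' addresses a non-issue. The whole role of the hypothesis $k<\frac{n+1}{2}$ (i.e.\ $2k\le n$) is that the degeneracy $2^n$ of each eigenvalue exceeds both $p_+=2^{k-1}(2^k+1)$ and $p_-=2^{k-1}(2^k-1)$, so the top $p_-$ entries of the sorted spectrum are all $\lambda_1$ and the bottom $p_+$ are all $\lambda_2$; hence \emph{every} admissible ordering yields the same matrix $\Lambda$ and there is nothing to optimize. That counting, which you defer to ``check explicitly,'' is the one place the hypothesis is used and is the actual content that makes the degenerate case tractable. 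Finally, your $k=1$ sanity check is miscomputed: the condition reads $\lambda_1\le 3\lambda_2$, not $\lambda_1\le\lambda_2$, and it is the former that reproduces the $\alpha\ge\tfrac12$ threshold of Lemma \ref{lemma:1n}.
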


\begin{proof}
Using the notation defined in  appendix \ref{App:Hild}, we have  $p=2^k$, $p_+ = 2^{k-1}(2^k-1) \leq 2^n$ and of course $p_- \leq 2^n$ so that $\rho\in\mathcal{S}^{{\lambda_2},{\lambda_1}}_{k,n-k+1}$ and lemma \ref{Lemma:Hild} applies.
\end{proof}

%\begin{proof}
%Using the notation of Defs. 5 and 7 in \cite{sup} we have  $p=2^k$, $p_+ = 2^{k-1}(2^k-1) \leq 2^n$ and of course $p_- \leq 2^n$ so that $\rho\in\mathcal{S}^{{\lambda_2},{\lambda_1}}_{k,n-k+1}$ and lemma 8 \cite{sup} applies.
%\end{proof}
\begin{lemma} A necessary condition, and a sufficient condition, for  $\rho^\alpha_n$ to be PPT from spectrum for all bipartitions
of the $n+1$ qubits $(n \geq 2)$ are respectively 
\begin{equation}
\alpha \geq 1 - \frac{1}{2^{\lfloor \frac{n}{2}\rfloor}},
\quad\text{and}\quad\alpha \geq 1 - \frac{1}{2^{\lfloor \frac{n+1}{2}\rfloor}}
\end{equation}
where $\lfloor x \rfloor$ is the floor of $x$ (i.e $x$ rounded down to the nearest integer).
\end{lemma}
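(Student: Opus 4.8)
The plan is to reduce the statement to the one‑parameter family of cuts $\{k;\,n+1-k\}$ with $1\le k\le\lfloor(n+1)/2\rfloor$ (partial transposition on either side gives operators with the same spectrum, so the two halves of a cut play interchangeable roles) and then to feed the eigenvalues of $\rho^\alpha_n$ into Lemma~\ref{lemma5}. From \eqref{DQC1state} the state $\rho^\alpha_n$ has exactly the two eigenvalues $\lambda_1=(2-\alpha)/2^{n+1}$ and $\lambda_2=\alpha/2^{n+1}$, each with multiplicity $2^n$, so $\lambda_1+\lambda_2=2^{-n}$ and $\lambda_1-\lambda_2=(1-\alpha)2^{-n}$.

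First I would dispatch every cut with $k<\tfrac{n+1}{2}$. For such $k$ Lemma~\ref{lemma5} applies directly, and its criterion $\tfrac12(\lambda_1+\lambda_2)-2^{k-1}(\lambda_1-\lambda_2)\ge0$ collapses, after substituting the two values above, to $\tfrac12-2^{k-1}(1-\alpha)\ge0$, i.e. $\alpha\ge 1-2^{-k}$ (for $k=1$ this recovers the $\alpha\ge\tfrac12$ threshold of Lemma~\ref{lemma:1n}). Since $1-2^{-k}$ is strictly increasing in $k$, the most demanding of these cuts is the largest admissible one, $k=\lfloor n/2\rfloor$ — one checks $\lfloor n/2\rfloor<\tfrac{n+1}{2}$ always, with equality to $\lfloor(n+1)/2\rfloor$ precisely when $n$ is even. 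This already gives the \emph{necessary} direction: if $\alpha<1-2^{-\lfloor n/2\rfloor}$ then the cut $\{\lfloor n/2\rfloor;\,n+1-\lfloor n/2\rfloor\}$ (which exists since $\lfloor n/2\rfloor\ge1$ for $n\ge2$) is already not PPT from spectrum. It also closes the \emph{sufficient} direction when $n$ is even: then every cut has $k\le n/2=\lfloor(n+1)/2\rfloor$, and $\alpha\ge1-2^{-n/2}$ makes all the inequalities $\alpha\ge1-2^{-k}$ hold at once, so for even $n$ the two displayed bounds in fact coincide and the condition is sharp.

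What remains, and this is the only real obstacle, is the single \emph{balanced} cut $\{\tfrac{n+1}{2};\,\tfrac{n+1}{2}\}$ that occurs when $n$ is odd and that Lemma~\ref{lemma5} does not cover (its hypothesis $k<\tfrac{n+1}{2}$ fails with equality, and the equal‑dimension situation falls outside the regime of Def.~\ref{def:S}). For this cut I would go back to Hildebrand's general criterion, Theorem~\ref{main}, and apply it to a state on $2^{(n+1)/2}\times2^{(n+1)/2}$ whose spectrum is $\{\lambda_1,\lambda_2\}$ with equal multiplicities $2^n$, the aim being only to verify that $\alpha\ge1-2^{-(n+1)/2}$ is \emph{sufficient} to meet it. Since for all the remaining (smaller) cuts $\alpha\ge1-2^{-(n+1)/2}>1-2^{-(n-1)/2}\ge1-2^{-k}$ holds automatically, this would complete the sufficient direction with the stated bound $\alpha\ge1-2^{-\lfloor(n+1)/2\rfloor}$.

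I expect the friction in that last step to be precisely that Hildebrand's criterion at equal dimensions does not collapse to the clean value $1-2^{-(n+1)/2}$ on the nose: it delivers that number only as a sufficient bound, with the true PPT‑from‑spectrum threshold for the balanced cut lying somewhere in $[\,1-2^{-(n+1)/2},\,1-2^{-(n-1)/2}\,]$. That is exactly why, for odd $n$, the lemma states necessary and sufficient conditions separated by one power of two in the exponent rather than a single equivalence; pinning the balanced cut down exactly is what one would need to make the statement tight, and I would flag that as the genuinely lossy point of the argument rather than a routine calculation.
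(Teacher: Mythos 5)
Your treatment of the unbalanced cuts is exactly the paper's: feed the two eigenvalues $\lambda_1=(2-\alpha)/2^{n+1}$, $\lambda_2=\alpha/2^{n+1}$ into Lemma~\ref{lemma5} to get the threshold $\alpha\ge 1-2^{-k}$ for every $\{k;n+1-k\}$ cut with $k<\tfrac{n+1}{2}$, take the worst case $k=\lfloor n/2\rfloor$ for necessity, and observe that for even $n$ this already covers everything. That part is fine.

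The genuine gap is the balanced cut $\{m+1;m+1\}$ for $n=2m+1$, which you correctly identify as the obstacle but do not close. Your plan — apply Theorem~\ref{main} directly at equal dimensions — is not a routine verification: there $p=2^{m+1}$, so $p_+=2^{2m+1}+2^m>2^n$, the degeneracy of the smallest eigenvalue no longer fills the upper triangle of $\Lambda$, the matrices $\Lambda(\lambda;\sigma_+,\sigma_-)$ genuinely depend on the choice of ordering, and one must check all $(\sigma_+,\sigma_-)\in\Sigma_\pm$. You end by conjecturing the answer rather than deriving it. The paper avoids this computation entirely with a padding argument: since $\rho^\alpha_{n+1}=\rho^\alpha_n\otimes\tfrac{1}{2}\openone$, if $U\rho^\alpha_nU^\dagger$ were non-PPT across $\{m+1;m+1\}$ then $(U\otimes\openone)\rho^\alpha_{n+1}(U\otimes\openone)^\dagger$ would be non-PPT across the \emph{unbalanced} cut $\{m+1;m+2\}$ of the $(n+2)$-qubit state (tensoring $\tfrac{1}{2}\openone$ onto the untransposed side commutes with the partial transpose and preserves non-positivity). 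That cut is covered by Lemma~\ref{lemma5}, so $\alpha\ge 1-2^{-(m+1)}$ suffices; the reverse embedding, $\rho^\alpha_n=\rho^\alpha_{n-1}\otimes\tfrac{1}{2}\openone$ viewed across $\{m+1;m+1\}$ versus $\{m;m+1\}$, pins the necessity side of the balanced cut at $\alpha\ge 1-2^{-m}$. This is the missing idea; with it, your "expected" interval for the balanced-cut threshold becomes a theorem rather than a guess, and the rest of your argument goes through unchanged.
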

\begin{proof}
The two eigenvalues   of $\rho^\alpha_n$ are  $\lambda_1=\frac{2-\alpha}{2^{n+1}}$  and $\lambda_2=\frac{\alpha}{2^{n+1}}$. By lemma \ref{lemma5},  
 $\rho^\alpha_n$ is  PPT from spectrum
for any  $\{ k;(n+1-k)\}$ cut  with $1\leq k < \frac{n+1}{2}$   if an only if 
\begin{equation} \label{eq:APPT}
\alpha \ge 1-\frac{1}{2^k}.
\end{equation}
If  $n=2m$,  
$k < \frac{n+1}{2}$ if and only if $k\leq m = \lfloor \frac{n}{2}\rfloor = \lfloor \frac{n+1}{2}\rfloor$
and the conditions coincide, giving a necessary and sufficient condition.

 If $n=2m+1$,  $\{m+1;m+1\}$ bipartitions are to be handled separately. 
 If $U\rho^\alpha_n U^\dagger$ is not PPT for a $\{m+1;m+1\}$ bipartition, then 
 $(U\otimes\openone)\rho^\alpha_{n+1}(U\otimes\openone)^\dagger$ is not PPT for a $\{m+1;m+2\}$ bipartition; consequently, if 
 $\alpha\geq 1 - \frac{1}{2^{\lfloor \frac{n+1}{2}\rfloor}} = 1-\frac{1}{2^k}$ for $k=m+1$, then 
 by \eqref{eq:APPT}, $\rho^\alpha_{n+1}$ is
 PPT from spectrum for $\{m+1;m+2\}$ bipartition and thus  $\rho^\alpha_n$ is PPT from spectrum also for $\{m+1;m+1\}$ bipartitions;
that proves the sufficiency condition.
 
If $n=2m+1$ and if   $U\rho^\alpha_{n-1}U^\dagger$ is not PPT for a
$\{m; m+1\}$ bipartition, then 
$(U\otimes\openone)\rho^\alpha_n(U\otimes\openone)^\dagger$ is not PPT for
a $\{m+1;m+1\}$ bipartition; thus, if $\rho^\alpha_n$ is PPT from spectrum for $\{m+1;m+1\}$ bipartitions, then $\rho^\alpha_{n-1}$ is PPT
from spectrum for $\{m; m+1\}$ bipartitions, implying by equation \eqref{eq:APPT} with $k=m =\lfloor \frac{n}{2}\rfloor$ that
$\alpha \geq  1-\frac{1}{2^{\lfloor\frac{n}{2}\rfloor}}$.%
%so $\alpha \ge 1 - \frac{1}{2^{\lfloor\frac{n}{2}\rfloor}}$ is a necessary condition for $\rho^\alpha_n$ to be PPT from spectrum for all bipartitions\footnote{ $\lfloor x \rfloor$ is the floor of $x$}.
% Moreover for the same reason if $\rho_n^\alpha$ is not PPT from spectrum for the bipartition $\{(k-1);(n-k+2)\}$ it is not PPT from spectrum for the bipartition $\{k; (n-k+1)\}$ so $\alpha \ge 1 - \frac{1}{2^\frac{n}{2}}$ is a necessary condition for $\rho^\alpha_n$ to be PPT from spectrum for all bipartitions. 
%
\end{proof}

The scaling of this condition means that for any fixed $\alpha$, or even for $\alpha <  1- \frac{1}{\mathrm{Poly}(n)}$ there will always be states that are entangled for some $U$ and large enough $n$. Moreover the number of bipartitions for which this statement holds grows exponentially with the size of $n$. In a follow up paper \cite{BBMinprep} we show how to  construct an explicit family of  $U$ such that $U \rho^\alpha_n U^\dagger$  is not PPT  when the condition \eqref{eq:APPT} is violated. 

\subsection{ Discord}

The  lack of entanglement for low $n$ in liquid state NMR experiments and DQC1 led to various conjectures about discord as a more appropriate signature of the quantum advantage \cite{arXiv:quant-ph/0110029,arXiv:0906.3656,arXiv:1109.5549}. Datta et al. \cite{DattaShajiCaves08} provided evidence for this conjecture by showing that the separable $\{1;\ncms\}$  bipartite cut in cDQC1 was usually discordant with respect to a measurement on the first qubit. This state is, however, never discordant with respect to a measurement on  $\ncms$   \cite{DiscordRMP}.  Moreover there are also (seemingly DQC1 complete)  subsets of cDQC1 where the final state is not discordant for a measurement on the first qubit  \cite{DakicVedralBrukner10}, however, even in these circuits the state may be discordant at some intermediate time \cite{DiscordRMP}. 

For a qualitative study of  the role  of discord in DQC1  it is enough to study the `clean' case $\alpha=0$. This follows from the fact that a $n+1$ qubit state of the form  $(1-\alpha)\rho+\alpha\frac{\openone}{2^{n+1}}$ is discordant if and only if $\rho$ is discordant \cite{Groisman2007}. This fact, together with the fact that entanglement implies discord means that for any bipartition and any $\alpha$ there are unitaries $U$ such that $U\rho^\alpha_n U^\dagger$ is discordant with respect to a measurement on either subsystem and any other bipartition.  

\section{Conclusions}

Questions regarding the role of entanglement in quantum computing algorithms have been studied since the first quantum algorithms were tested in liquid state NMR.  The DQC1 algorithm was designed as a testbed for answering these questions, but even with this simplified model the results are inconclusive.  Here we studied the ability of DQC1 to generate entanglement under various constraints. 

 Noise in the initial state (i.e $\alpha$ in Eq. \ref{DQC1state} ) determines the ability of a generic circuit to generate entanglement. We showed that in any  $\{1;n\}$ bipartition, a circuit cannot generate entanglement when $\alpha\ge\frac{1}{2}$ (Lemma \ref{lemma:1n}). We also provided a necessary and sufficient condition for the DQC1 circuit to generate non PPT states (Eq. \ref{eq:APPT}).

We defined a new property called {boundary separable in unitary orbits} (def. \ref{def:BSU}) and showed that the initial states of DQC1 at $\alpha=0$ are easy to entangle in any  bipartition (lemma \ref{lemma:BSU}). On the other hand the  DQC1-complete subset, cDQC1 cannot generate entanglement in the $\{1;\ncms\}$ cut, despite the fact that the first qubit is the only one with any coherence at any time. We conclude that there is no reason to suspect that families of  DQC1-complete circuits are those that generate more entanglement than other non-trivial families of circuits. 

 Our most surprising result is that the  entanglement in the set all of states generated in a DQC1 circuit is  more fragile  to depolarizing noise than a generic mixed state.   Based on this and the conclusion above, we are optimistic about the possibility of finding DQC1-complete circuits where entanglement is never generated at any point for the vast majority of bipartite cuts (and possibly all bipartite cuts) at all $n$. Such circuits  would provide extremely strong evidence that quantum computational speedup can be achieved without entanglement. 
 
 We pointed out that, at the qualitative level, the study of discord can be restricted to $\alpha=0$. Based on that, we conclude that  a subset of circuits that do not generate discord on the one hand, and are DQC1 complete on the other is unlikely to exist. 

We believe that the  next challenge will be to  find a  family  of DQC1  circuits which   encode classically hard  computational problems, and at the same time, do not generate any entanglement for most bipartitions at some $\alpha<1-\frac{1}{\mathrm{Poly}({n})}$;  or conversely  give  strong evidence that such a family is unlikely to exist, for example by finding  an algorithm that can simulate any separable instance of DQC1. 

Our results were presented in the context of DQC1. The  approaches developed here will be useful for further study of entanglement  and quantum correlations in other candidate SuQCs.  More generally, our methods can be applied to the study of entanglement and quantum correlation in other complex systems involving mixed quantum states.

\appendix

\section{PPT from spectrum for DQC1 states}\label{App:Hild}

In this section we provide some general results regarding the possibility that the state $U \rho^\alpha_n U^\dagger$ is not PPT. We begin with a review of results by Hildebrand \cite{Hildebrand2007} and continue to explicitly calculate the necessary and sufficient conditions for PPT from spectrum for a particular set of states where the largest and smallest eigenvalues are highly degenerate. 
 
\subsection{Recap of  definitions and main theorem from Hildebrand 2007  \cite{Hildebrand2007}}
%\subsubsection{Notation}
%We denote the  partial transpose as $(\mathrm{T}\otimes\openone)$  label the partially transposed matrix $A^\Gamma=(\mathrm{T}\otimes\openone)(A)$. $\mathrm{H}(n)$ denotes the space of $n\times n$ Hermitian matrices
%or Hermitian operators on $\sh_n$, 
%$\mathrm{H}_+(nm)$ denotes the set of positive semidefinite $nm\times nm$ matrices
%or positive semi definite (PSD) operators on $\sh_n\otimes \sh_m$; $[n]$ denotes the set $\{1,\ldots, n\}$ of $n$ elements.

\subsubsection{Notations and definitions}
In the following, $[n]$ denotes the set $\{1,\ldots, n\}$ of $n$ elements;
$\mathrm{H}(n)$ denotes the space of $n\times n$ Hermitian matrices
or Hermitian operators on $\sh_n$, 
$\mathrm{H}_+(nm)$ denotes the set of positive semidefinite $nm\times nm$ matrices
or positive semi definite (PSD) operators on $\sh_n\otimes \sh_m$; 

%\begin{lemma}\label{minsum}
% Let $(a_i)_{1\leq i \leq n}$ and $(b_i)_{1\leq i \leq n}$ be two ordered sequences of $n$ real numbers in decreasing order, then
%\begin{equation}\label{eq:minsum}
%\min_{\sigma \in \mathfrak{S}(n)} \sum_{i=1}^na_{\sigma(i)}b_i = \sum_{k=1}^n a_{n+k-1}b_k.
%\end{equation}
%\end{lemma}

\begin{definition}
 Let  $p_+ = \displaystyle\frac{p(p+1)}{2}$ and $p_- = \displaystyle\frac{p(p-1)}{2}$
 for $p\in\mathbb{N}$; let also $S^p_+ = \{(i,j) \mid 1\leq i\leq j\leq p\}$ and $S^p_- = \{(i,j)\mid 1\leq i < j\leq p\}$. A \emph{linear ordering} of the pairs $(i,j) \in S^p_+$ (resp in $S^p_-$) is a bijective map $\sigma_+:S^p_+ \to [p_+]$
(resp $\sigma_-:S^p_- \to [p_-]$).
% which simply gives the index from $1$ to $p_+$ (resp $1$ to $p_-$) to put the pair $(i,j)$ in a table indexed from $1$ to $p_+$ (resp. from $1$ to $p_-$).
\end{definition}
\begin{definition}
The linear ordering $\sigma_-:S_- \to [p_-]$ is said to be \emph{consistent} with $\sigma_+:S^p_+\to [p_+]$  if for all $(k_1,l_1), (k_2,l_2) \in S_-^p$, 
$\sigma_+(k_1,l_1) < \sigma_+(k_2,l_2)$ implies $\sigma_-(k_1,l_1) < \sigma_-(k_2,l_2)$.
\end{definition}

\begin{remark}
For each $\sigma_+:S^p_+\to [p_+]$ there is exactly one 
$\sigma_-:S^p_- \to [p_-]$ that is consistent with $\sigma_+$.
\end{remark}
\begin{definition}
 Let $x\in\mathbb{R}^p$ 
be a vector with non negative entries. A \emph{linear ordering} 
$(\sigma_+,\sigma_-)$ is said to be \emph{compatible with} $x$ if $\sigma_+(k_1,l_1) < \sigma_+(k_2,l_2)$ implies $x_{k_1}x_{l_1} \geq x_{k_2}x_{l_2}$.
\end{definition}

The linear ordering above is a simple way to put the  products
$x_kx_l$ in decreasing order for $1\leq k\leq l \leq p$. 
If the products are all distinct, there is just one way. In case of identical elements the order is not relevant. 

\begin{align*} 
\Sigma_{\pm} = \Big\{ (\sigma_+,\sigma_-) \mid &\ \exists\ x \in \mathbb{R}^p \mid x_1 > x_2 > \cdots > x_p > 0:\\
 & (\sigma_+,\sigma_-) \text{ compatible with $x$} \Big\}.
\end{align*}

\begin{definition}\label{def:lambda}
If $\lambda = (\lambda_i)_{1\leq i \leq nm}$ is a sorted list of $mn$ real 
numbers in decreasing order, $p=\min(m,n)$ and $(\sigma_+,\sigma_-)$
is a consistent  pair of orderings of $S^p_+$ and $S^p_-$, then $\Lambda(\lambda;\sigma_+,
\sigma_-)$ is the $p\times p$ matrix defined by
\[
\Lambda(\lambda;\sigma_+,
\sigma_-) = \begin{cases} \lambda_{nm+1-\sigma_+(k,l)} & k \leq l\\[1.5ex]
 -\lambda_{\sigma_-(l,k)} & k > l
 \end{cases}
\]
\end{definition}

\subsubsection{Main result}

\begin{theorem}\label{main} If $A\in \mathrm{H}_+(nm)$ has  
$\lambda = (\lambda_i)_{1\leq i \leq nm}$ as eigenvalues
in decreasing
order,
then $A$  has a positive semi-definite partial transpose (PPT) for all decompositions of $\mathcal{H}_{nm}$ as a tensor product space $\mathcal{H}_n \otimes
\mathcal{H}_m$ if and only if for all $(\sigma_+,\sigma_-) \in \Sigma_{\pm}$ the following holds:
\[ \Lambda(\sigma_+,\sigma_-) + \Lambda(\sigma_+,\sigma_-)^T \succeq 0.
\] 
\end{theorem}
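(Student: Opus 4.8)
The plan is to convert the statement ``$A$ has PPT for every factorisation $\sh_{nm}=\sh_n\otimes\sh_m$'' into a single scalar minimisation and then recognise Hildebrand's matrices inside it. Since a change of factorisation amounts to conjugating $A$ by a unitary, the claim is that $\PT(UAU^{\dagger})\succeq 0$ for all unitaries $U$; using the Rayleigh characterisation of the smallest eigenvalue and the fact that the partial transpose is self-adjoint for the Hilbert--Schmidt product, this is equivalent to $\min_{U}\min_{|\psi\rangle}\tr\!\big(UAU^{\dagger}\,\PT(\kb{\psi})\big)\ge 0$. Because transposing the larger factor and transposing the smaller factor differ only by a global transpose, we may take the transposed factor to have dimension $p=\min(n,m)$ and $|\psi\rangle$ to have Schmidt rank at most $p$. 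Writing a Schmidt decomposition $|\psi\rangle=\sum_k\sqrt{\mu_k}\,\ket{a_k}\ket{b_k}$, a direct computation shows that $M:=\PT(\kb{\psi})$ has eigenvalue $\mu_k$ on $\ket{a_k}\ket{b_k}$ and acts as $\sqrt{\mu_k\mu_l}\left(\begin{smallmatrix}0&1\\1&0\end{smallmatrix}\right)$ on each plane $\mathrm{Span}(\ket{a_k}\ket{b_l},\ket{a_l}\ket{b_k})$ with $k<l$, so its nonzero eigenvalues are the $p_+$ numbers $\sqrt{\mu_k\mu_l}$ with $k\le l$ and the $p_-$ numbers $-\sqrt{\mu_k\mu_l}$ with $k<l$.

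Next I would evaluate the inner minimisation over $U$ by the von Neumann trace inequality: for fixed spectrum $\lambda_1\ge\cdots\ge\lambda_{nm}$ of $A$, $\min_U\tr(UAU^{\dagger}M)$ is obtained by pairing the eigenvalues of $A$ and of $M$ in opposite orders. Thus the $p_-$ negative eigenvalues of $M$ pair with $\lambda_1,\dots,\lambda_{p_-}$, the $p_+$ positive ones pair with the $p_+$ smallest $\lambda_i$, and the remaining zero eigenvalues of $M$ contribute nothing. The relative order of the products $\mu_k\mu_l$ dictates which $\lambda_i$ multiplies each $\sqrt{\mu_k\mu_l}$, and encoding this decreasing order by linear orderings $\sigma_+$ of $S^p_+$ and $\sigma_-$ of $S^p_-$ --- which are automatically consistent and compatible with $v:=(\sqrt{\mu_k})_k$ --- the upper-triangular part of $\Lambda(\lambda;\sigma_+,\sigma_-)$ collects precisely the coefficients $\lambda_{nm+1-\sigma_+(k,l)}$ of the positive eigenvalues of $M$ and the strictly lower-triangular part the coefficients $-\lambda_{\sigma_-}$ of the negative ones, so that $\min_U\tr(UAU^{\dagger}M)=\tfrac12\,v^{T}\!\big(\Lambda+\Lambda^{T}\big)v$. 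Consequently $A$ is PPT from spectrum if and only if $v^{T}(\Lambda+\Lambda^{T})v\ge 0$ for every $v\ge 0$, with $\Lambda$ built from the ordering compatible with $v$ (orderings with ties or vanishing entries being handled by a limiting argument from $\Sigma_{\pm}$).

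It then remains to show that this family of inequalities is equivalent to $\Lambda(\sigma_+,\sigma_-)+\Lambda(\sigma_+,\sigma_-)^{T}\succeq 0$ for every $(\sigma_+,\sigma_-)\in\Sigma_{\pm}$. The direction ``$\succeq 0$ implies the inequalities'' is immediate. For the converse, the compatibility regions of the orderings in $\Sigma_{\pm}$, together with coordinate permutations, cover the nonnegative orthant, so each $\Lambda+\Lambda^{T}$ is nonnegative on a full-dimensional cone; moreover each $\Lambda+\Lambda^{T}$ is symmetric with nonnegative diagonal entries (twice a smallest eigenvalue $\lambda_{nm+1-\sigma_+(k,k)}$) and nonpositive off-diagonal entries (a small eigenvalue minus a large one). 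For such a sign pattern copositivity already forces positive semidefiniteness, which is what upgrades nonnegativity on the cone to the global statement. I expect this last step --- passing from nonnegativity of the quadratic form on the compatibility cone of each ordering to global positive semidefiniteness of $\Lambda+\Lambda^{T}$ --- to be the main obstacle, since a quadratic form nonnegative on a full-dimensional cone need not be positive semidefinite; the argument genuinely uses both the $Z$-matrix-type sign pattern and the monotone ordering $\lambda_1\ge\cdots\ge\lambda_{nm}$. A lesser technical point is carrying out the von Neumann minimisation carefully when $M$ is singular and $A$ is degenerate, and checking the consistency and simultaneous compatibility of the two extracted orderings.
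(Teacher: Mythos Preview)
The paper does not actually prove this theorem; it is stated as a recap of Hildebrand's result and the ``proof'' reads simply \emph{Cf.\ \cite[Theorem 1]{Hildebrand2007}}. Your outline is, in fact, a reasonable sketch of Hildebrand's own argument: reduce PPT-from-spectrum to a double minimisation, compute the spectrum of $\PT(\kb{\psi})$ from the Schmidt data, apply the von Neumann trace inequality, and recognise the resulting bilinear form as $\tfrac12\, v^{T}(\Lambda+\Lambda^{T})v$.

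The one genuine gap is exactly where you flag it. You observe that each $\Lambda(\sigma)+\Lambda(\sigma)^{T}$ is nonnegative on its own compatibility cone and has the $Z$-matrix sign pattern, and you then invoke ``copositivity $\Rightarrow$ PSD'' for such matrices. But copositivity means nonnegativity on the \emph{entire} nonnegative orthant, whereas you have only established it on a proper sub-cone; a symmetric $Z$-matrix nonnegative on a full-dimensional sub-cone of the orthant need not be PSD, so the sign pattern alone does not rescue you. The missing observation is that the compatible ordering is the \emph{minimiser} over all orderings: for any $(\sigma_{+},\sigma_{-})$ and any $v\ge 0$, the rearrangement inequality applied separately to the ``$p_{+}$ smallest $\lambda$'s versus the positive eigenvalues of $M$'' block and to the ``$p_{-}$ largest $\lambda$'s versus the negative eigenvalues of $M$'' block gives
\[
v^{T}\bigl(\Lambda(\sigma)+\Lambda(\sigma)^{T}\bigr)v \;\ge\; v^{T}\bigl(\Lambda(\sigma_{v})+\Lambda(\sigma_{v})^{T}\bigr)v \;\ge\; 0,
\]
where $\sigma_{v}$ is compatible with $v$. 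Hence every $\Lambda(\sigma)+\Lambda(\sigma)^{T}$ is copositive on the whole orthant, not merely on its compatibility region. With that in hand, your $Z$-matrix step (for symmetric matrices with nonpositive off-diagonals one has $w^{T}Mw\ge |w|^{T}M\,|w|$, so copositive implies PSD) does complete the converse direction. Your handling of the forward direction, of ties via limits from $\Sigma_{\pm}$, and of the degenerate von Neumann pairing is otherwise sound.
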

\begin{proof} Cf. \cite[Theorem 1]{Hildebrand2007} \end{proof}
%is realized by a state 
%Then $A$ is absolutely PPT iff for any $\bm{x}\in \mathbb{R}^p$ with
%$x_1\geq x_2\geq\cdots\geq x_p\geq 0$, $p=\min(m,n)$, 
%$\sum_i x_i^2 =1$, there is a pair $(\sigma_+,\sigma_-)$ of linear orderings
%compatible with $\bm{x}$ such that
%\[
%\bm{x}^T\Lambda(\lambda; \sigma_+,\sigma_-) \bm{x} \geq 0
%\]
%\end{proposition}

\subsection{Application to special states with highly degenerate eigenvalue spectrum}\label{App:degenerate}

\begin{definition} \label{def:S}
For $m < n$, we define   $\mathcal{S}^{{\lambda_+},{\lambda_-}}_{m,n}$ as the set of  states  $\tau \in H_+(2^m2^n)$  that 
 have ${\lambda_+}$ as their largest eigenvalue with degeneracy at least $p_-=2^{2m-1}-2^{m-1}$ and ${\lambda_-}$ as the smallest eigenvalue with degeneracy at least $p_+=2^{2m-1}+2^{m-1}$ (note that there are $2^{m+n}-2^{2m}$ free eigenvalues).  
\end{definition}

\begin{lemma}\label{Lemma:Hild}
All states $\tau \in \mathcal{S}^{{\lambda_+},{\lambda_-}}_{m,n}$ are separable from spectrum if and only if $\frac{1}{2}({\lambda_+}+{\lambda_-})-2^{m-1}({\lambda_+}-{\lambda_-})\ge0$
\end{lemma}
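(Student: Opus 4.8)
The plan is to apply Hildebrand's criterion for PPT from spectrum, Theorem~\ref{main}, and to exploit the extreme degeneracy of the spectra in $\mathcal{S}^{\lambda_+,\lambda_-}_{m,n}$: the crucial point is that the whole family of matrices $\Lambda(\lambda;\sigma_+,\sigma_-)$ appearing there collapses to a \emph{single} matrix, independent of the linear ordering, whose positivity is a one-line computation.

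First I would set up the bookkeeping. Since $m<n$, the parameter $p$ of Theorem~\ref{main} is $\min(2^m,2^n)=2^m$, so that $p_+=\frac{p(p+1)}{2}=2^{2m-1}+2^{m-1}$ and $p_-=\frac{p(p-1)}{2}=2^{2m-1}-2^{m-1}$ --- exactly the degeneracies demanded in Definition~\ref{def:S}. Now read off Definition~\ref{def:lambda}: for $k\le l$ the entry $\Lambda_{k,l}=\lambda_{2^{m+n}+1-\sigma_+(k,l)}$ is one of the $p_+$ smallest eigenvalues of $\tau$, hence equals $\lambda_-$ because $\lambda_-$ already has multiplicity at least $p_+$; for $k>l$ the entry $\Lambda_{k,l}=-\lambda_{\sigma_-(l,k)}$ is minus one of the $p_-$ largest eigenvalues, hence equals $-\lambda_+$. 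Thus $\Lambda(\lambda;\sigma_+,\sigma_-)$ does not depend on $(\sigma_+,\sigma_-)$ at all --- the $2^{m+n}-2^{2m}$ ``free'' eigenvalues never enter it, which is precisely why the final criterion will involve only $\lambda_+$ and $\lambda_-$ --- and it is the $p\times p$ matrix carrying $\lambda_-$ on and above the diagonal and $-\lambda_+$ strictly below.

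The computation then finishes itself. One has $\Lambda+\Lambda^{T}=(\lambda_++\lambda_-)\,\openone+(\lambda_--\lambda_+)\,J$, where $J$ is the $p\times p$ all-ones matrix, whose eigenvalues are $p$ (simple) and $0$ (multiplicity $p-1$); hence $\Lambda+\Lambda^{T}$ has eigenvalues $\lambda_++\lambda_-\ge 0$ (multiplicity $p-1$) and $(\lambda_++\lambda_-)-2^{m}(\lambda_+-\lambda_-)$ (simple). Since the condition of Theorem~\ref{main} now amounts to the positivity of this one matrix, $\tau$ is PPT from spectrum if and only if $(\lambda_++\lambda_-)-2^{m}(\lambda_+-\lambda_-)\ge 0$, i.e.\ $\frac{1}{2}(\lambda_++\lambda_-)-2^{m-1}(\lambda_+-\lambda_-)\ge 0$. (Lemma~\ref{lemma5} is then recovered by specialising the two local dimensions to $2^k$ and $2^{n-k+1}$.)

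The one delicate step --- and where I expect the real work to be --- is upgrading ``PPT from spectrum'' to ``separable from spectrum'', since Theorem~\ref{main} yields only the former. The ``only if'' direction is immediate: if the inequality fails, $\tau$ is not PPT from spectrum and \emph{a fortiori} not separable from spectrum. For ``if'' one must show that the inequality forces $U\tau U^{\dagger}$ to be separable for \emph{every} unitary $U$. I would get this from the equivalence of absolute PPT and absolute separability for the spectra at hand --- rigorous for $m=1$ via Johnston's qubit--qudit characterisation \cite{Johnston2013} and supported more generally by \cite{Arunachalam2014} --- or, to keep the argument self-contained, by exhibiting an explicit separable decomposition of $U\tau U^{\dagger}$ that stays valid throughout the region $\frac{1}{2}(\lambda_++\lambda_-)\ge 2^{m-1}(\lambda_+-\lambda_-)$, the worst case being the one in which every free eigenvalue is pushed to $\lambda_+$ or to $\lambda_-$. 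Making such a decomposition tight enough to reproduce the Hildebrand bound, rather than merely the strictly smaller Gurvits--Barnum separability ball \cite{Gurvits2002}, is the part of the argument that needs the most care; for $m\ge 2$ it is conceivable that a genuine gap remains precisely here.
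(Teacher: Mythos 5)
Your computation is exactly the paper's: the authors likewise observe that all the matrices $\Lambda(\lambda;\sigma_+,\sigma_-)$ collapse to the single matrix with $\lambda_-$ on and above the diagonal and $-\lambda_+$ below, write $\Lambda+\Lambda^T=(\lambda_++\lambda_-)\openone+(\lambda_--\lambda_+)K$, and extract the threshold $(\lambda_++\lambda_-)-p(\lambda_+-\lambda_-)\ge 0$ (by minimizing the quadratic form rather than diagonalizing $K$, an immaterial difference). The one point where you diverge is to your credit: the paper's proof simply asserts that Theorem~\ref{main} yields \emph{separability} from spectrum, whereas the theorem only gives PPT from spectrum; the ``delicate step'' you flag is real, is not addressed in the paper's proof (the special case, Lemma~\ref{lemma5}, is correctly restated as a PPT-from-spectrum criterion), and for $m\ge 2$ rests on the then-unproven equivalence of absolute PPT and absolute separability, so your caution there is warranted.
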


\begin{proof}Let $\lambda$ be the list of eigenvalues of $\tau \in \mathcal{S}^{{\lambda_+},{\lambda_-}}_{m,n}$ sorted in decreasing order and  $p=2^m$;   all the matricies $\Lambda(\lambda;\sigma_+,\sigma_-)$ are 
then equal and have the following form 
\[
\Lambda_{k,l} = \begin{cases}
 \displaystyle\phantom{-} {\lambda_-} &\hspace*{1cm} 1 \leq k \leq l \leq p \\[2ex]
%                                      & \\
 \displaystyle -{\lambda_+}           &\hspace*{1cm} p \geq k > l \geq 1
                                                 \end{cases}
\]
If follows that the off diagonal entries of $\Lambda + \Lambda^T$ are equal to $\lambda_- - \lambda_+$ and the diagonal entries are $2\lambda_-$ and thus
\[
\Lambda + \Lambda^T = (\lambda_-+\lambda_+) \openone  + (\lambda_- - \lambda_+)K  
\]
where $K_{i,j} = 1$ for all $1 \leq i,j \leq p$. For any $\bm{x}\in \mathbb{R}^p$
\[ \bm{x}^T(\Lambda + \Lambda^T)\bm{x} =  (\lambda_++\lambda_-) \|\bm{x}\|^2 - (\lambda_+ - \lambda_-) \big(\sum_i x_i\big)^2 
\]
%That matrix is equal to $(\lambda_--\lambda_+)K + (\lambda_-+\lambda_+)\openone$ where M the matrix with all $p^2$ entries equal to 1. It follows that for any 
%$\boldsymbol{x}, \boldsymbol{x}^TM\boldsymbol{x} = (\sum_i x_i)^2$ and consequently we need
%
%\[\boldsymbol{x}^T(\Lambda+\Lambda^T)\boldsymbol{x} = (\lambda_--\lambda_+)(\sum_i x_i)^2 + (\lambda_++\lambda_-)\sum_i x_i^2 \geq 0\]
%
%for all $\boldsymbol{x}$ for just for all $\boldsymbol{x}$ of norm 1 and then the condition becomes $(\lambda_+-\lambda_-)(\sum_x x_i)^2 \leq \lambda_++\lambda_-$ with constraint $\sum_i x_i^2=1$. The maximum on the left-hand side is obtained when $x_i=1/p$ and then the inequality is $(\lambda_+-\lambda_-)p \leq \lambda_++\lambda_-$. That
%
and the minimum value for $\|\bm{x}\|=1$ is obtained if all $x_i$ are equal, i.e. $x_i=\frac{1}{\sqrt{p}}$ for $1\leq i\leq p$, giving
a minimum of $(\lambda_++\lambda_-) -p(\lambda_+-\lambda_-)$. By theorem \ref{main}, $\tau$ is separable from spectrum if and only if $(\lambda_++\lambda_-) -p(\lambda_+-\lambda_-)\geq 0$.
\end{proof}

\section{Boundary separable states}\label{App:BSS}
In the main text we stated that the set of states that are boundary separable in their unitary orbits are a subset of the set of boundary separable states. Below is proof of that statement. 
\begin{lemma}
If $\rho$ is boundary separable in its unitary orbits, then $\rho$ is boundary separable
\end{lemma}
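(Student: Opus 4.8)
\emph{Proof proposal.} The plan is to show that $\rho$ lies in the topological boundary of the set $\mathrm{SEP}$ of separable states on $\mathcal{H}=\mathcal{H}_A\otimes\mathcal{H}_B$. First I would recall the one structural fact I need: in finite dimension $\mathrm{SEP}$ is a closed (indeed compact and convex) subset of the density operators, so $\partial\,\mathrm{SEP} = \mathrm{SEP}\setminus\mathrm{int}(\mathrm{SEP})$; equivalently, a separable state $\rho$ is boundary separable precisely when every neighborhood of $\rho$ contains an entangled state. So it suffices to exhibit entangled states arbitrarily close to $\rho$, and then invoke that $\rho$ is itself separable (which is part of Definition \ref{def:BSU}).

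Next I would use the hypothesis directly. By Definition \ref{def:BSU}, for each $\epsilon>0$ there is a unitary $U^\epsilon$ with $\|U^\epsilon-\openone\|<\epsilon$ such that $\sigma_\epsilon := U^\epsilon\rho\,U^{\epsilon\dagger}$ is entangled. The only computation required is the elementary continuity estimate for conjugation,
\[
\|\sigma_\epsilon-\rho\|\le\|(U^\epsilon-\openone)\rho\,U^{\epsilon\dagger}\|+\|\rho\,(U^{\epsilon\dagger}-\openone)\|\le 2\|\rho\|\,\|U^\epsilon-\openone\|,
\]
using submultiplicativity, unitary invariance of the norm, and $\|U^{\epsilon\dagger}-\openone\|=\|U^\epsilon-\openone\|$; since all norms on the finite-dimensional operator space are equivalent, the precise choice of norm here is immaterial. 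Hence $\|\sigma_\epsilon-\rho\|\to 0$ as $\epsilon\to 0$.

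Finally I would conclude: the entangled states $\sigma_\epsilon$ converge to $\rho$, so every neighborhood of $\rho$ contains an entangled state; combined with $\rho\in\mathrm{SEP}$, this gives $\rho\notin\mathrm{int}(\mathrm{SEP})$, hence $\rho\in\partial\,\mathrm{SEP}$, i.e.\ $\rho$ is boundary separable. I do not expect any genuine obstacle: the content of the statement is just the observation that a unitary orbit through $\rho$ meeting the entangled set arbitrarily close to the identity pins $\rho$ to the separable/entangled interface. The only point to be careful about is invoking closedness of $\mathrm{SEP}$, so that ``lies on the boundary'' legitimately reduces to ``separable but not interior''.
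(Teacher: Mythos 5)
Your proposal is correct and follows essentially the same route as the paper's own proof: take the unitaries guaranteed by Definition \ref{def:BSU}, write $U\rho U^\dagger-\rho$ as a telescoping sum, and bound it by a constant times $\|U-\openone\|$ to conclude that entangled states accumulate at $\rho$. Your added remark that closedness of the separable set is what turns ``separable with entangled states arbitrarily close'' into ``boundary separable'' is left implicit in the paper but is a harmless (and slightly more complete) addition.
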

\begin{proof}
The following inequalities apply whether $\|\cdot\|$ is the trace norm, the operator norm or the Hilbert-Schmidt norm. 
Here  we assume it is the trace norm.
Let $U$ be such that $\|U-\openone\| < \epsilon/2$ and $U\rho U^\dagger$ is entangled. Using the fact
that $\|\cdot\|$ is a norm (i.e the triangle inequality holds), that $\|A^\dagger\| = \|A\|$, $\|AB\| \leq \|A\|\|B\|$ and $\|AB\| \leq \|A\|_\infty\|B\|$, 
\begin{align*}
\|U\rho U^\dagger-\rho\| &= \|U\rho (U^\dagger -\openone) + (U - \openone)\rho\|\\
&\leq \|U\|_\infty \|\rho\| \|U^\dagger - \openone\| + \|U -\openone\| \|\rho\|\\
&< \epsilon \qedhere
\end{align*}
\end{proof}

\vspace{10pt}

\begin{acknowledgements}
We would like to thank N. Johnston, R. Laflamme, and R. Liss for useful discussions and comments.  A.B was partly funded through NSERC, Industry Canada and CIFAR. M.B. was partly support by NSERC and FQRNT through 
INTRIQ. T.M was funded by  the Wolfson Foundation and the Israeli MOD
Research and Technology Unit.  A.B and T.M were partly supported The Gerald Schwartz \& Heather Reisman Foundation. A.B is currently at the Center for Quantum Information and Quantum Control at the University of Toronto.

\end{acknowledgements}
 
%bibliography{rotationsbib2}
%merlin.mbs apsrev4-1.bst 2010-07-25 4.21a (PWD, AO, DPC) hacked
%Control: key (0)
%Control: author (8) initials jnrlst
%Control: editor formatted (1) identically to author
%Control: production of article title (-1) disabled
%Control: page (0) single
%Control: year (1) truncated
%Control: production of eprint (0) enabled
%

%%%%%%%%%%%%%%%%%%%%%%%%%%%%%%%%%%%%%%%%%%%%%%
%%%%%%%%%%%%%%%%%%%%%%%%%%%%%%%%%%%%%%%%%%%%%%
%%%%%%%%%%%%%%%%%%%%%%%%%%%%%%%%%%%%%%%%%%%%%%%%
%%%%%%%%%%%%%%%%%%%%%%%%%%%%%%%%%%%%%%%%%%%%%%%%
%%%%%%%%%%%%%%%%%%%%%%%%%%%%%%%%%%%%%%%%%%%%%%%%%%%%
%%%%%%%%%%%%%%%%%%%%%%%%%%%%%%%%%%%%%%%%%%%%%%%%%%
%%%%%%%%%%%%%%%%%%%%%%%%%%%%%%%%%%%%%%%%%%%%%%%%%%%%
%%%%%%%%%%%%%%%%%%%%%%%%%%%%%%%%%%%%%%%%%%%%%%%%%%%%%%

\end{document}